\newtheorem{theorem}{Theorem}[section]
\newtheorem{lemma}[theorem]{Lemma}
\newtheorem{proposition}[theorem]{Proposition}
\newtheorem{corollary}[theorem]{Corollary}
\newtheorem{assumption}[theorem]{Assumption}
\newenvironment{definition}[1][Definition]{\begin{trivlist}
\item[\hskip \labelsep {\bfseries #1}]}{\end{trivlist}}
\providecommand{\keywords}[1]{\textbf{\textit{Keywords---}} #1}
\def\mathbi#1{\textbf{\em #1}}
\begin{document}

\title{Hamiltonian Flow Simulation of Rare Events}
\author[1]{Raphael Douady}
\author[2]{Shohruh Miryusupov  \thanks{Corresponding author: \texttt{shohruh.miryusupov@malix.univ-paris1.fr}}}
\affil[1]{Stony Brook University, CNRS, Université Paris 1 Panthéon Sorbonne}
\affil[2]{Université Paris 1 Panthéon Sorbonne\\ Labex RéFi \footnote{This work was achieved through the Laboratory of Excellence on Financial Regulation (Labex RéFi) supported by PRES heSam under the reference ANR­10­LABX­0095. It benefited from a French government support managed by the National Research Agency (ANR) within the project Investissements d'Avenir Paris Nouveaux Mondes (Investments for the Future Paris ­New Worlds) under the reference ANR­11­IDEX­0006­02.}}
\maketitle
\begin{abstract}
Hamiltonian Flow Monte Carlo(HFMC) methods have been implemented in engineering, biology and chemistry. HFMC makes large gradient based steps to rapidly explore the state space. The application of the Hamiltonian dynamics allows to estimate rare events and sample from target distributions defined as the change of measures. The estimates demonstrated a variance reduction of the presented algorithm and its efficiency with respect to a standard Monte Carlo and interacting particle based system(IPS). We tested the algorithm on the case of the barrier option pricing.
\end{abstract}
\keywords{Hamiltonian Flow Monte Carlo, Particle Monte Carlo, Sequential Monte Carlo, Monte Carlo, rare events, option pricing, diffusion dynamics, Hamiltonian system}
\tableofcontents

\section{Introduction}

\hspace{3ex} Hamiltonian flow based Monte Carlo simulations originates from physics have been used in many applications in statistics, engineering for a number of years. However these methods are not widely used in the estimation of rare events and in the financial option pricing practise.

This paper proposes Hamiltonian Flow Monte Carlo technique for an efficient estimation of the rare event probability. Similarly to an importance sampling technique this method involves a change of probability measure. The random variables are sampled according to a modified probability measure that differs from a reference measure.

A rare event is the probability $\mathbb{P}(f(x_t)>a_t)$ for large values of $x$. One way to deal with this problem is to change an original measure, so that $a_k$ is not too large in the new measure.
Define the following set:
\begin{equation}
A_t=\{x\in \mathbb{R}^M, f(x_t)>a_t\}
\end{equation}
\begin{equation}
\mathbb{E}^{\mathbb{P}}[\mathbb{1}_{A_t}]=\mathbb{E}^{\mathbb{Q}}[\mathbb{1}_{A_t}L_t]
\end{equation} 
where $L=\frac{d\mathbb{P}}{d\mathbb{Q}}$ is a Radon-Nykodim derivative.

The Hamiltonian approach in the Monte Carlo context was developed by Duane et al. \cite{Duane1987}, R. Neal \cite{N2010} where they proposed an algorithm for sampling probability distributions with continuous state spaces. The advantage of the Hamiltonian based Monte Carlo is in the fact that we could extend the state space by including a momentum variable that will force in our context to move long distances in the state space in a single update step. We use this property of the Hamiltonian dynamics to explore rarely-visited areas of the state space and efficiently estimate rare-event probability. Algorithm consists of two parts: simulation of the Hamiltonian dynamics and Metropolis-Hastings test, that removes the bias and allows large jumps in the state space. We will show the performance of the algorithm on the Down-Out Barrier option technique with low level of a barrier.

HFMC could be considered within the optimal transportation problem, which was posed back in the 18-th century. Like HFMC, other simulation based approaches such as particle methods \cite{DM2013},\cite{DMG2005} or the transportation using a homotopy \cite{RDSM2017} allow to move a set of particles from the measure $\mathbb{P}$ to the measure $\mathbb{Q}$, by minimizing the transportation cost. We will show how the rare events estimation could be computed using interacting particle systems \cite{DM2004}.

The paper is organized as follows. Section 2 introduces state of the art approach to estimate rare events: MC and PMC and formulates the problem within the context of Barrier option pricing. Section 3. describes the Hamiltonian Flow Monte Carlo Algorithm. Section 4 presents the law of large numbers and convergence for HFMC. An adaptation of the Hamiltonian Flow Algorithm in the case of a Barrier option, numerical results and discussion is presented in Section 5. Section 6 concludes.

\section{Monte Carlo and Interacting Particle System}

\subsection{Problem Formulation}
\hspace{3ex} Barrier option pricing is one of the cases when we encounter to the case of rare events. Consider a sequence of random variables $\{X_n\}_{n\geq 0}$, which in the financial context could be interpreted as asset prices, which forms a Markov Chain on the space $\mathbb{R}^{n_S}$. Given some stochastic process $\{X_t\}_{t\geq 0}$, for any test function $f$, we would like to compute the following expectation:

\begin{equation} \label{expec}
C=\mathbb{E}^{\mathbb{P}}[f(X_T)\mathbb{1}_{X_{t \in [0,T]} \in A_t}]
\end{equation}

One of the most popular ways to deal with this problem is importance sampling, when we replace the original statistical measure $\mathbb{P}$ by an importance measure $\mathbb{P}^{\delta}$. Then for $\textbf{X}_n=(X_1,...,X_n)$.:

\begin{equation} 
C=\int f(X_{n_t}) \frac{d\mathbb{P}}{d\mathbb{P}^{\delta}}(\textbf{X}_{n_t})d\mathbb{P}^{\delta}(\textbf{X}_{n_t})\prod_{n=1}^{n_t} \mathbb{1}_{X_{n} \in A_n} = \mathbb{E}^{\mathbb{P}^{\delta}}\left[f(X_{n_t})\frac{d\mathbb{P}}{d\mathbb{P}^{\delta}} \prod_{n=1}^{n_t}\mathbb{1}_{X_{n} \in A_n}\right]
\end{equation}
where the rare event set $A_n$ is given by:
\begin{equation}
A_n=\{X_n\in \mathbb{R}^{n_S}, f(X_n)>a_n\}
\end{equation}
In practise we don't have an explicit form of the likelihood ratio $\frac{d\mathbb{P}}{d\mathbb{P}^{\delta}}$, so it becomes unfeasable unless one considers very simple toy examples. One of the solutions was interacting particle system(IPS), which was proposed by Del Moral and Garnier \cite{DMG2005}, where they proposed to generate particles(samples) in two steps, i.e. particle mutation and selection. The idea is to approximate the ratio of $\mathbb{P}$ with respect to some importance measure $\mathbb{P}^{\delta}$ by choosing a weight function $\omega_n$ that approximates the Radon-Nikodym derivative $\frac{d\mathbb{P}}{d\mathbb{P}^{\delta}}$. If we assume that $\mathbb{P}$ and $\mathbb{P}^{\delta}$ have density function $p$ and $\widetilde{p}$ respectively, for $n_S$ particles $\{X_n^{(m)}\}_{m=1}^{n_S}$ we could define the weight function iteratively by:
\begin{equation}
\prod_{n=1}^{n_t}\omega_n (\textbf{X}_n^{(m)}) \propto \frac{d\mathbb{P}}{d\mathbb{P}^{\delta}}(\textbf{X}_{n_t}^{(m)})=\prod_{n=1}^{n_t}\frac{p_n(X^{(m)}_n,X^{(m)}_{n-1})}{p_n^{\delta}(X^{(m)}_n,X^{(m)}_{n-1})}
\end{equation}

Since two measures $\mathbb{P}$ and $\mathbb{P}^{\delta}$ form a Markov chain, the Radon-Nykodim derivative could be decomposed into the product of ratio of the transition density $p_n(\cdot, X^{(m)}_{n-1})$ to the transition density with respect to the measure $\mathbb{P}^{\delta}$.

The normalized importance weight function is given by:

\begin{equation}
W_n(\textbf{X}_{n}^{(m)})=\frac{\omega_n(\textbf{X}_{n}^{(m)})}{\frac{1}{n_S}\sum_{j=1}^{n_S} \omega_n(\textbf{X}_{n}^{(j)})}
\end{equation}

The IPS estimate of an expectation (\ref{expec}) will have the following form:
\begin{equation}
\widehat{C}^{IPS}=\frac{\mathbb{E}\left[f(X_{n_t})\prod_{n=1}^{n_t-1} \omega_{n}(X_{n}) \mathbb{1}_{X_n \in  A_n} \right]}{\mathbb{E}\left[\prod_{n=1}^{n_t-1} \omega_{n}(X_{n}) \mathbb{1}_{X_n \in  A_n} \right]}
\end{equation}



In our experiments we choose a potential function(an unnormalized importance weight) of the form:
\begin{equation}
\omega_n(\textbf{X}_{n_t}^{(m)})=\prod_{n=1}^{n_t} e^{\delta(X_n^{(m)}-X_{n-1}^{(m)})}
\end{equation}

where $\delta$ is an exponential tilting parameter. One issue with this approach is in the fact that an optimal choice of tilting parameter $\delta$ has to be judiciously chosen by running simulations and, in fact, it is fixed across all time steps $n=1,...,n_t$.

\subsection{Algorithm}
\hspace{3ex}  The  algorithm could be described by the following scheme, $\forall m = 1,...,n_S$:
\begin{equation}
X_{n}^{(m)}  \xrightarrow[]{Sampling} \widehat{X}_{n+1}^{(m)} \xrightarrow{Selection} \Phi(\widehat{X}_{n+1}^{(m)})= X_{n+1}^{(m)}
\end{equation}

At each time step $n=1,...,n_t$ we draw $n_S$ independent random variables from the density $p_n^{\delta}(\cdot,X_{n-1}^{(m)})$ to construct $n_S$ particles, $\widehat{\textbf{X}}_{n}^{(m)}=(\widehat{X}_{n}^{(m)},\widehat{\textbf{X}}_{n-1}^{(m)})$. Given generated particles, we select, or draw independently $n_S$ particles $\textbf{X}_{n}^{(m)}=(X_{0}^{(m)},...,X_{n}^{(m)})$ with replacement of rejected particles according to their probability weights:

\begin{equation}
W_n(\widehat{\textbf{X}}_{n}^{(m)})=\frac{\omega_n(\widehat{\textbf{X}}_{n}^{(m)})}{\frac{1}{n_S}\sum_{j=1}^{n_S} \omega_n(\widehat{\textbf{X}}^{(j)}_n)}
\end{equation}

And at time step $n_t$, we get the following IPS estimator: 

\begin{equation*}
\mathbb{E}[f(X_{n_t})\prod_{n=1}^{n_t} \mathbb{1}_{X_n \in A_n}]\approx\frac{1}{n_S}\sum_{m=1}^{n_S} \left(f(\widehat{X}_{n_t}^{(m)})\prod_{n=1}^{n_t} \omega_{n-1}(\textbf{X}_{n-1}^{(m)}) \mathbb{1}_{\widehat{\textbf{X}}_{n}^{(m)} \in A_n} \right)
\end{equation*}



\begin{algorithm}[H] 
Initialization: $n_S$ - $\#$(simulations), $n_t$ - $\#$(time steps), $X_0$ - initial value\\

\For{$n=1,...,n_t$}{
\For{$m=1,...n_S$}{
Generate $X_{n}^{(m)}$ from $p(\cdot,X_{n-1}^{(m)})$ and set $\widehat{\textbf{X}}_{n}^{(m)}=(\widehat{X}_{n}^{(m)},\textbf{X}_{n-1}^{(m)})$;\\
\If{$\widehat{\textbf{X}}_{n}^{(m)} \in A$}{
$\widehat{X}_{n}^{(m)}=0$
\\
\Else{
Compute the weight: $\omega_n(\widehat{\textbf{X}}_{n}^{(m)})$.
}
}
}

\If{$n<N$}{

Resample using probability weight: $W_n(\widehat{\textbf{X}}_{n}^{(m)})=\frac{\omega_n(\widehat{\textbf{X}}_{n}^{(m)})}{\frac{1}{n_S}\sum_{j=1}^{n_S} \omega_n(\widehat{\textbf{X}}^{(j)}_n)}$ to sample $\textbf{X}_n^{(m)}$.

}
}


    \caption{IPS algorithm}
\end{algorithm}

\section{Hamiltonian Flow Monte Carlo}

\subsection{Markov Chains on a Phase Space}
\hspace{3ex} From section 2 we know that one of the ways to deal with rare event probabilities is to change a measure:
\begin{equation} \label{RN}
\mathbb{E}^{\mathbb{P}}[f(X_{n_t})\prod_{n=1}^{n_t} \mathbb{1}_{X_n \in A_n}]=\mathbb{E}^{\mathbb{Q}}[f(X_{n_t})\frac{d\mathbb{P}}{d\mathbb{Q}}\prod_{n=1}^{n_t} \mathbb{1}_{X_n \in A_n}]
\end{equation}

To approximate Radon-Nikodym derivative $\frac{d\mathbb{P}}{d\mathbb{Q}}$ we will generate Markov Chain that will converge to an ergodic distribtion. Let us introduce a random process $X_u$ in a pseudo-time $u$ and consider the following SDE, which is a gradient flow distrubed by a noise:
\begin{equation} \label{Sm}
dX_u=-\nabla \Psi(X_u)du+2\sqrt{\beta^{-1}}dW_u
\end{equation}

where $\Psi(X):=-\log(p(X))$ is a potential. Under the assumption of ergodicity, the auto-correlated path $X_u$ asymptotically, i.e. $u\rightarrow \infty$ draws samples from a stationary distribution:

\begin{equation}
\pi(X)=\frac{1}{\mathcal{Z}}\exp(\Psi(X))
\end{equation}
where a normalizing constant $\mathcal{Z}$:
\begin{equation}
Z=\int_{\mathbb{R}^n} e^{-\beta\Psi(x)} dx
\end{equation}

This can be seen as a unique solution of the following Fokker-Plank equation, given that $\Psi$ satisfies to some growth condition:
\begin{equation} \label{FP}
\frac{\partial p(t,x)}{\partial t} =div (\nabla(\Psi(x)\rho)) +\beta^{-1} \Delta p
\end{equation}

When we mentioned ergodicity, we meant, that for a class of regular functions $\phi:\mathbb{R}^X\rightarrow \mathbb{R}$ and $x_0$ a.s., the Markov Chain satisfies:
\begin{equation}
\frac{1}{L}\sum_{l=1}^L\phi(x_l)\rightarrow \int_{\mathbb{R^X}}\phi(x)\pi(dx)=\mathbb{E}^{\pi}[\phi(X)]
\end{equation}

Observe that eq. (\ref{Sm}) is a reversible process, which is interesting from theoretical point of view, but in practise the speed of convergence is not optimal.  One of the ways to improve the convergence is to add a divergence-free drift $b$ and consider the following modified SDE:
\begin{equation}
dX_u=(-\nabla \Psi(X_u)+b(X_t))du+2\sqrt{\beta^{-1}}dW_u
\end{equation}
in order to satisfy detailed balance condition, we assume that $\nabla(be^{-\Phi})=0$.

Another way to improve the convergence is to consider a generalized Langevin SDE:
\begin{equation}
 \ddot{X}_u^{\gamma}=-\nabla \Psi(X_u^{\gamma})-\dot{X}_u^{\gamma}+\sqrt{2\beta^{-1}}\dot{W}_u
\end{equation}

We can rewrite it as:
\begin{equation} 
\left\{\begin{array}{c}

dX_u = P_udu \\
dP_u = -\nabla \Psi(X_u)du-P_udu+\sqrt{\frac{2}{\beta}}dW_u 

\end{array}\right.
\end{equation}

where the pair $(X,P)$ is a kinetic process with $X$ is the position and $P=\frac{dX}{du}$ is the velocity, that acts as an instantaneous memory.

The invariant function of the Markov process $\{x,P\}$, if it exists, is given by:
\begin{equation}
\pi_0(x,P)=\frac{1}{\mathcal{Z}}e^{-\beta \mathcal{H}(x,P)}, \ \ \mathcal{Z}=\int_{\mathbb{R}^2} e^{-\beta \mathcal{H}(x,P)}dPdx
\end{equation}
where 
\begin{equation} \label{ham}
\mathcal{H}(x,P)=\frac{1}{2}P\mathfrak{M}^{-1}P+\Psi(x)
\end{equation}
is a Hamiltonian function on $\mathbb{R}^2$. 

We will use the Hamiltonian system to generate The Markov Chain and approximate a Radon-Nikodym derivative $\frac{d\mathbb{P}}{d\mathbb{Q}}$. The Hamiltonian Flow Monte Carlo uses a physical simulation of moving particles with momentum under the impact of the energy function to propose Markov chain transitions, that allows rapidly explore state space. Its fast exploration can be explained by the fact that it extends the state space by an auxilliary momentum variables, $P$, and then runs a physical simulation to move long distances along probability contours in the extended state space.  

We remind that, given the Markov Chain $\{X_l\}_{l\geq 0}$, Birkhoff theorem says that
\begin{equation}
\frac{1}{n_S}\sum_{l=1}^{n_S} f(X_l)\underset{n_S \rightarrow \infty}{\longrightarrow} \int f(x)d\pi(dx)=\varrho \ \mbox{a.s.}
\end{equation}
where $\varrho$ is the expectation of $f(X)$ with respect to the unique invariant distribution $\pi$ of the Markov Chain.

\subsection{Hamiltonian Flow's Integrator and Properties}
We will use a configuration space $\mathbi{M}$ with periodic boundary conditions. Each point on $\mathbi{M}$ will be a set of $n_S$ particles: $X^{(1)},...,X^{(n_S)}$ and a generic momentum space $\mathbb{R}^{n_S}$, in this case the cotangent space is given by $T^{*}\mathbi{M}=\mathbb{R}^{n_S}\times \mathbb{R}^{n_S}$.


\begin{equation}
\begin{aligned}
\Xi_u: T^{*}\mathbi{M} \rightarrow T^{*}\mathbi{M}
\\
(X,P)\rightarrow \Xi_u(X,P)
\end{aligned}
\end{equation}

$\Xi_u(X_0,P_0)$ is the solution to the Hamilton's equation: 

\begin{equation} \label{HS}
\left\{\begin{array}{c}

dX_u = \mathfrak{M}^{-1} P_udu \\
dP_u = -\nabla \Psi(X_u)du

\end{array}\right.
\end{equation}

Hamiltonian system has three main properties: reversibility, conservation of energy and volume preservation.

\subsubsection{Sympletic Integration of Hamiltonian Equations}

In most cases we can not compute the Hamiltonian flow in closed form and that is why we need to discretize the system (\ref{HS}). To make sure that we can preserve symplecticness and reversibility, we will discretize using leap-frog integrator, which is a symplectic integrator of the Hamiltonian system.

Split the Hamiltonian (\ref{ham}) into 3 parts:
\begin{equation}
\mathcal{H}_1=\frac{1}{2}\Psi(X), \ \ \mathcal{H}_2=\frac{1}{2}\langle P,\mathfrak{M}^{-1} P \rangle, \ \  \mathcal{H}_3=\frac{1}{2}\Psi(X)
\end{equation}
Taking each of these terms separately to be the Hamiltonian function of a Hamiltonian
system gives rise to equations of motion with trivial dynamics. 
\begin{equation}
\left\{ \begin{array}{c} P_n(u+\frac{\Delta u}{2})=P_n(u)-\frac{\Delta u}{2}\frac{\partial \Psi}{\partial x_n} (X_n(u)) \\ 
X_n (u+\Delta u)=X_n(u)+\Delta u P_n(u+\frac{\Delta u}{2})\mathfrak{M}^{-1} \\ 
P_n(u+\frac{\Delta u}{2})=P_n(u+\frac{\Delta u}{2})-\frac{\Delta u}{2}\frac{\partial \Psi}{\partial x_n}(X_n(u+\Delta u))
\end{array} \right.
\end{equation}
where $\Delta u$ is the discretization size of the Hamiltonian.

Consider a concatenation of three maps:
\begin{equation}
\Xi_{n}=\Xi_{\Delta u, \mathcal{H}_3}\circ \Xi_{\Delta u, \mathcal{H}_2}\circ \Xi_{\Delta u, \mathcal{H}_1}
\end{equation}
where $\Xi_{\Delta u, \mathcal{H}_1}:(X(0),P(0))\rightarrow (X(\Delta u),P(\Delta u))$. Similarily, $\Xi_{\Delta u , \mathcal{H}_1}=\Xi_{\Delta u , \mathcal{H}_3}$, and $\Xi_{\Delta u , \mathcal{H}_2}$ is calculated to be position update.
Since the energy is preserved by the flow, the trajectories evolve on the submanifold of constant energy:
\begin{equation}
T^{*}\mathbi{M}(E_0)=\{(X,P)\in T^{*}\mathbi{M};(\mathcal{H}(X,P)=E_0)\}
\end{equation}
where $E_0=\mathcal{H}(X_0,P_0)$ is the energy of the initilized data.
\subsection{Hamiltonian Flow Monte Carlo on Rare Events Sets}

Let $\mathcal{H}(X,P)$ be the Hamiltonian function on $\mathbb{R}^{2n_S}$, where $X$ is a potential, and $P$ is a momentum variable of the Hamiltonian system. The algorithm consists of two steps, first sampling from prior distribution values for potential and momentum and then a physical simulation of the Hamiltonian dynamics. To make sure that at the end of each physical simulation of time step $n+1$ we will have a probability measure, i.e. values will not exceed $1$, we will use a Metropolis-Hastings test $\alpha_{n+1}$, by choosing the minimum between $1$ and the ratio of generated values of potential at time steps $n+1$ and $n$, which is an acceptance probability of potential simulated by the Hamiltonian dynamics. If we extend the state space $X=\{X_1,...,X_n\}$ and denote the extended space as $\widetilde{X}=\{X_1,...,X_n,P_1,...,P_n \}$, we can denote the acceptance probability as:

\begin{equation}
\alpha_{n+1}(\widetilde{X}_{n},\widetilde{X}_{n+1})=1 \wedge e^{(-\mathcal{H}(X_{n+1},P_{n+1})+\mathcal{H}(X_{n},P_{n}))\Delta t}
\end{equation}

If we assume that the importance measure $\mathbb{Q}$ admits the following importance distribution with a kernel $\mathcal{K}$:
\begin{equation}
q(d\widetilde{X}_{n+1})=\int_{\mathbb{R}^{2M}} p(\widetilde{X}_n)\mathcal{K}(\widetilde{X}_n,d\widetilde{X}_{n+1})d\widetilde{X}_n
\end{equation}

Then, the associated Radon-Nikodym derivative will have the following form:
\begin{equation}
\frac{d\mathbb{P}}{d\mathbb{Q}}(\widetilde{X}_{n+1})=\frac{d\mathbb{P}(\widetilde{X}_{n+1})}{\int_{\mathbb{R}^{2M}} p(\widetilde{X}_n)\mathcal{K}(\widetilde{X}_n,d\widetilde{X}_{n+1})d\widetilde{X}_n}
\end{equation}

Assume that at each times step $n$ we have $n_S$ sample of r.v. $\{X_n^{(m)}\}_{m=1}^{n_S}$. Now we can define a transition kernel $\mathcal{K}$ as follows.
\begin{definition}
Consider a mapping $\Xi_n:\widetilde{X}_n^{(m)} \rightarrow \widetilde{X}_{n+1}^{(m)}$, which is a transformation in $\mathbb{R}^{2n_S}$, $\mathfrak{u} \sim Unif[0,1]$.  Then a transition kernel $\mathcal{K}(\cdot,d\widetilde{X}^{(m)}_{n+1})$ is given by:
\begin{equation}
\mathcal{K}(\widetilde{X}^{(m)}_{n},d\widetilde{X}^{(m)}_{n+1})=\mathbb{1}_{\mathfrak{u}\leq \alpha_{n+1}}\Xi_n(\widetilde{X}_n^{(m)})d\widetilde{X}^{(m)}_{n+1}+\mathbb{1}_{\mathfrak{u} > \alpha_{n+1}}\widetilde{X}^{(m)}_{n}\delta_{\widetilde{X}^{(m)}_{n}}(d\widetilde{X}^{(m)}_{n+1})
\end{equation}
\end{definition}
This kernel can be interpreted as the probability to move from the point $\widetilde{X}_n^{(m)}$ to a new proposed point $\widetilde{X}_{n+1}^{(m)}$, which is simulated through a discretized Hamiltonian flow $\Xi_n(\cdot)$ . If the proposed step is not accepted, then next step is the same as the current step, i.e. $\widetilde{X}_{n+1}^{(m)}=\widetilde{X}_{n}^{(m)}$. This procedure allows as to leave the joint distribution of $X_n^{(m)}$ and $P_n^{(m)}$ invariant. Volume preservation means that the determinant of the Jacobian matrix of a transformation $\Xi_n$ is equal to one.

We will need basic property of symplectic integrators, i.e. reversibility.
\begin{lemma}
The integrator $\Xi_n$ is reversible.
\end{lemma}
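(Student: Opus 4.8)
The plan is to establish reversibility in the sense standard for symplectic integrators: writing $F$ for the momentum-reversal involution $F(X,P)=(X,-P)$, which satisfies $F\circ F=\mathrm{Id}$, I would show that
\[
F\circ \Xi_n\circ F=\Xi_n^{-1},
\]
equivalently $(F\circ\Xi_n)\circ(F\circ\Xi_n)=\mathrm{Id}$. This is the precise statement that the leap-frog map can be retraced by flipping the momentum, and it is exactly the property the Metropolis--Hastings correction $\alpha_{n+1}$ relies upon.

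First I would isolate the elementary building blocks of the leap-frog scheme. The half-step momentum kick $A:(X,P)\mapsto\bigl(X,\,P-\tfrac{\Delta u}{2}\nabla\Psi(X)\bigr)$ leaves the position fixed, while the full-step drift $B:(X,P)\mapsto\bigl(X+\Delta u\,\mathfrak{M}^{-1}P,\,P\bigr)$ leaves the momentum fixed. Reading off the three-line update that defines $\Xi_n$, the integrator factors as the palindrome $\Xi_n=A\circ B\circ A$. Each block is plainly invertible, with $A^{-1}:(X,P)\mapsto\bigl(X,\,P+\tfrac{\Delta u}{2}\nabla\Psi(X)\bigr)$ and $B^{-1}:(X,P)\mapsto\bigl(X-\Delta u\,\mathfrak{M}^{-1}P,\,P\bigr)$.

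The key computation is that each block intertwines with $F$ into its own inverse. A direct substitution gives $F\circ A\circ F=A^{-1}$ and $F\circ B\circ F=B^{-1}$, since flipping the sign of $P$ turns the subtraction of $\tfrac{\Delta u}{2}\nabla\Psi$ into an addition and reverses the displacement $\Delta u\,\mathfrak{M}^{-1}P$. Combining this with $F\circ F=\mathrm{Id}$ and the palindromic factorization yields
\[
F\circ\Xi_n\circ F=(F A F)\,(F B F)\,(F A F)=A^{-1}B^{-1}A^{-1}=(A B A)^{-1}=\Xi_n^{-1},
\]
which is the asserted reversibility.

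I do not expect a deep obstacle; the delicate points are purely structural bookkeeping. I would emphasize that the \emph{symmetric} ordering of the split Hamiltonian, with $\mathcal{H}_1=\mathcal{H}_3=\tfrac12\Psi$ sandwiching the kinetic piece $\mathcal{H}_2$, is exactly what produces the palindrome $A\circ B\circ A$; an asymmetric split would break the cancellation and the argument would fail. I would also remark that the same elementary decomposition delivers the volume preservation claimed earlier, since $A$ and $B$ are shear maps with unit Jacobian determinant, so reversibility and symplecticity both descend from this single factorization.
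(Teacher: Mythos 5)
Your proof is correct, but it takes a genuinely different route from the paper for the simple reason that the paper contains no argument at all: it disposes of the lemma with the single sentence ``We refer to \cite{HS2005} for the proof of this result.'' What you have written is, in effect, the content of that citation specialized to the leap-frog scheme. Your factorization $\Xi_n=A\circ B\circ A$ into the half-step kick $A:(X,P)\mapsto\bigl(X,\,P-\tfrac{\Delta u}{2}\nabla\Psi(X)\bigr)$ and the drift $B:(X,P)\mapsto\bigl(X+\Delta u\,\mathfrak{M}^{-1}P,\,P\bigr)$, the check that conjugation by the momentum flip $F$ turns each block into its inverse, and the palindrome cancellation $F\circ\Xi_n\circ F=A^{-1}\circ B^{-1}\circ A^{-1}=(A\circ B\circ A)^{-1}=\Xi_n^{-1}$ are all exact. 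The only cosmetic mismatch with the paper is that its third leap-frog line carries a typo (the left-hand side should read $P_n(u+\Delta u)$, not $P_n(u+\tfrac{\Delta u}{2})$), but you read the scheme correctly as kick--drift--kick, consistent with the paper's own splitting $\Xi_n=\Xi_{\Delta u,\mathcal{H}_3}\circ\Xi_{\Delta u,\mathcal{H}_2}\circ\Xi_{\Delta u,\mathcal{H}_1}$ with $\mathcal{H}_1=\mathcal{H}_3=\tfrac12\Psi$. What your approach buys: it is self-contained and elementary; it makes visible that the \emph{symmetric} splitting is essential (with an asymmetric splitting one gets $A^{-1}\circ B^{-1}$, which is not $(A\circ B)^{-1}$ in general, so the argument genuinely fails); and, as you observe, the same shear decomposition gives unit Jacobian determinant, i.e.\ the volume-preservation property that the paper asserts in Sections 3.3 and 3.4, also without proof. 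What the paper's citation buys is only brevity.
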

We refer to \cite{HS2005} for the proof of this result.
\begin{assumption} \label{ass1}
\begin{itemize}
\item The potential $\Psi\in \mathcal{C}^{1}$ is bounded from above;
\item The gradient $\nabla\Psi$ is a globally Lipschitz function.
\end{itemize}
\end{assumption}

\begin{lemma}
If the potential $\Psi$ satisfies to the assumption \ref{ass1}, then the kernel $\mathcal{K}$ is irreducible and the Markov Chain satisfies
\begin{equation}
\forall x \in \mathbf{M}, \forall B\in \mathcal{B}(\mathbf{M}), \mu^{Leb}(B)>0, \mathcal{K}(x,B)>0
\end{equation}
\end{lemma}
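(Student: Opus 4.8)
The plan is to exploit the two features that make a Hamiltonian-flow kernel spread mass across the whole state space: the momentum is drawn afresh from a Gaussian with full support, and the symplectic proposal $\Xi_n$ is a diffeomorphism, so that as the momentum ranges over $\mathbb{R}^{n_S}$ the proposed position sweeps out every set of positive Lebesgue measure. Concretely, a transition from $x$ first samples $P$ from the momentum marginal of $\pi_0$, namely the Gaussian with density proportional to $e^{-\frac{\beta}{2}\langle P,\mathfrak{M}^{-1}P\rangle}$, then proposes $(\widehat{X},\widehat{P})=\Xi_n(x,P)$ and accepts with probability $\alpha_{n+1}$. Writing $\Phi_x(P)$ for the position component of $\Xi_n(x,P)$, I would reduce the bound $\mathcal{K}(x,B)>0$ to two facts: (i) $\Phi_x$ pulls back any target of positive measure to a set of positive Lebesgue measure in momentum space, and (ii) the acceptance probability stays bounded below on a suitable set of momenta.

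For (i) I would first treat a single leap-frog step, where the position update reads $\Phi_x(P)=x+\Delta u\,\mathfrak{M}^{-1}\bigl(P-\tfrac{\Delta u}{2}\nabla\Psi(x)\bigr)$. Its differential is the constant invertible matrix $\Delta u\,\mathfrak{M}^{-1}$, so $\Phi_x$ is a smooth map with nonvanishing Jacobian; consequently $\Phi_x^{-1}(B)$ has positive Lebesgue measure whenever $B$ does, and the full-support Gaussian assigns it positive probability. For a proposal built from several leap-frog steps I would invoke Assumption \ref{ass1}: the globally Lipschitz gradient $\nabla\Psi$ guarantees global existence, uniqueness and $\mathcal{C}^{1}$ dependence of the integrator map, the reversibility Lemma gives invertibility, and volume preservation gives a nonvanishing Jacobian, so $\Phi_x$ remains a local diffeomorphism and the same change-of-variables conclusion holds.

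For (ii) I would restrict the momenta to a ball $\{|P|\le R\}$ and note that $S_R=\{P:\,|P|\le R,\ \Phi_x(P)\in B\}$ still carries positive Gaussian mass once $R$ is large. On $S_R$ the energy increment $\mathcal{H}(\widehat{X},\widehat{P})-\mathcal{H}(x,P)$ stays bounded above: over a bounded integration time the symplectic integrator conserves energy up to an error governed by $\Delta u$, the Lipschitz constant of $\nabla\Psi$, and the size of the state, all of which are controlled once $|P|\le R$ and $\sup\Psi<\infty$. Hence $\alpha_{n+1}\ge\alpha_0>0$ uniformly on $S_R$, and combining with (i) gives
$$\mathcal{K}(x,B)\ \ge\ \alpha_0\int_{S_R}\frac{1}{\mathcal{Z}}\,e^{-\frac{\beta}{2}\langle P,\mathfrak{M}^{-1}P\rangle}\,dP\ >\ 0,$$
which is the asserted irreducibility with respect to $\mu^{Leb}$.

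The step I expect to be the main obstacle is the non-degeneracy of $\Phi_x$ for the iterated integrator, i.e. showing that the Jacobian of the position-in-terms-of-momentum map never collapses. For one leap-frog step this is transparent from the affine form, but after several steps it has to be read off from Assumption \ref{ass1} and the symplectic structure rather than computed directly, and one must check that the composition does not develop a degenerate direction. A secondary subtlety is that the kernel as displayed in the Definition suppresses the momentum-refresh integration; the argument relies on the momentum resampling described in the algorithmic text, since it is precisely this resampling that endows $\mathcal{K}(x,\cdot)$ with a Lebesgue density and hence makes the irreducibility statement meaningful.
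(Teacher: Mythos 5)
The first thing to note is that the paper itself contains no proof of this lemma: its ``proof'' is a one-line citation of \cite{CLS}, and your proposal is essentially a reconstruction of the argument given in that reference --- Gaussian momentum refreshment with full support, non-degeneracy of the momentum-to-position map of the leap-frog proposal, and a uniform lower bound on the Metropolis acceptance probability over a ball of momenta. For a proposal consisting of a \emph{single} leap-frog step your argument is complete and correct: $\Phi_x(P)=x+\Delta u\,\mathfrak{M}^{-1}\bigl(P-\tfrac{\Delta u}{2}\nabla\Psi(x)\bigr)$ is affine with constant invertible Jacobian, so $\Phi_x^{-1}(B)$ has positive Lebesgue measure and hence positive Gaussian mass, and your compactness argument for the acceptance bound goes through (starting from fixed $x$ with $|P|\le R$, all intermediate positions lie in a compact set, $\nabla\Psi$ is continuous there, $\Psi$ is bounded above, so the energy increment is bounded and $\alpha\ge\alpha_0>0$). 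Your secondary remark is also correct and worth stressing: the kernel $\mathcal{K}$ as displayed in the paper's Definition has no momentum refreshment, and without resampling $P$ the irreducibility claim is simply false, so the lemma only makes sense for the algorithmic kernel that redraws $P\sim\mathcal{N}(0,\mathfrak{M})$ at each step.

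The genuine gap is exactly where you flagged it, but the resolution you sketch is not merely incomplete --- it rests on a false implication. For the $L$-step proposal you write that reversibility gives invertibility and ``volume preservation gives a nonvanishing Jacobian, so $\Phi_x$ remains a local diffeomorphism.'' Symplecticity and volume preservation are properties of the \emph{full} phase-space map $(X,P)\mapsto\Xi_n(X,P)$; they say nothing about the partial map $P\mapsto\text{(position component)}$, which can degenerate even for exact, hence symplectic and reversible, flows. Take $\Psi(x)=\tfrac12|x|^2$, $\mathfrak{M}=I$: the exact flow is $X(u)=x\cos u+P\sin u$, so at integration time $u=\pi$ the final position is $-x$ for \emph{every} $P$; the map $\Phi_x$ is constant, the Metropolis test always accepts (energy is conserved), and the chain started at $x$ visits only $\{x,-x\}$ --- not irreducible. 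The $L$-step leap-frog map exhibits the same resonance degeneracy for unlucky pairs $(\Delta u,L)$. So non-degeneracy of the iterated map must be \emph{proved}, not read off from the symplectic structure: the standard fixes are to take $L=1$, or to impose a smallness condition on the total integration time $L\Delta u$ relative to the Lipschitz constant of $\nabla\Psi$, under which $\Phi_x$ is a Lipschitz perturbation of the invertible affine map $P\mapsto x+L\Delta u\,\mathfrak{M}^{-1}P$ and hence a bi-Lipschitz bijection of $\mathbb{R}^{n_S}$. (The latter route also handles the fact that Assumption 3.4 only gives $\Psi\in\mathcal{C}^1$ with Lipschitz gradient, so the composed map need not have a classical Jacobian at all --- another reason the ``nonvanishing Jacobian'' phrasing cannot be used as stated.) With such a condition added, your argument closes; without it, the lemma as stated is not established by your proof, and indeed needs the hypotheses under which \cite{CLS} prove it.
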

\begin{proof}
We refer to $\cite{CLS}$.
\end{proof}
\begin{proposition}
Given that the assumption \ref{ass1} holds, then for $n=1,...,n_S$, the irreducible Markov Chain defined by a transformation $\Xi_n$ is reversable under the distribution $\pi$:
\begin{equation}
\pi(d\widetilde{X}^{(m)}_{n})\mathcal{K}(\widetilde{X}^{(m)}_{n},d\widetilde{X}^{(m)}_{n+1})=\pi(d\widetilde{X}^{(m)}_{n+1})\mathcal{K}(\widetilde{X}^{(m)}_{n+1},d\widetilde{X}^{(m)}_{n})
\end{equation}
Thus $\pi(x)$ is the invariant distribution of the Markov Chain $\{\widetilde{X}_{n}\}_{n=1}^{n_S}$.
\end{proposition}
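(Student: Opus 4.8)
The plan is to establish detailed balance (reversibility) directly from the structure of the Metropolis--Hastings kernel $\mathcal{K}$, and then to deduce $\pi$-invariance by integration, with uniqueness inherited from the irreducibility lemma proved above. First I would split the kernel into its two constituent pieces: the \emph{acceptance} part, proportional to $\mathbb{1}_{\mathfrak{u}\le\alpha_{n+1}}$ and supported on the image of the deterministic flow $\Xi_n$, and the \emph{rejection} part, the diagonal mass $\mathbb{1}_{\mathfrak{u}>\alpha_{n+1}}\,\delta_{\widetilde{X}^{(m)}_n}(d\widetilde{X}^{(m)}_{n+1})$. The rejection term satisfies the balance identity trivially: both sides of the claimed equation then charge only the diagonal $\{\widetilde{X}_n=\widetilde{X}_{n+1}\}$, on which the expression is manifestly symmetric under interchange of the two arguments.

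The main work is the acceptance term. Here I would use two properties of the symplectic integrator recorded earlier: volume preservation (the Jacobian determinant of $\Xi_n$ equals one) and reversibility (the preceding Lemma). Volume preservation ensures that the change of variables $\widetilde{X}_{n+1}=\Xi_n(\widetilde{X}_n)$ introduces no Jacobian factor, so that the proposal is effectively symmetric and the Hastings correction reduces to the ratio of target densities. Reversibility --- meaning that $\Xi_n$ is inverted upon negating the momentum, together with the fact that the kinetic part $\tfrac12 P\mathfrak{M}^{-1}P$ of $\mathcal{H}$ is even in $P$, so that $\pi$ is invariant under $P\mapsto -P$ --- guarantees that the backward move from $\widetilde{X}_{n+1}$ to $\widetilde{X}_n$ is realized by the same flow. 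With these in hand, and recognizing that, up to the constant relating the inverse temperature $\beta$ and the step size $\Delta t$, the acceptance probability is exactly the target ratio $\alpha_{n+1}(\widetilde{X}_n,\widetilde{X}_{n+1})=1\wedge \pi(\widetilde{X}_{n+1})/\pi(\widetilde{X}_n)$ since $\pi\propto e^{-\beta\mathcal{H}}$, I would invoke the elementary minimum identity
\begin{equation}
\pi(\widetilde{X}_n)\Big(1\wedge \tfrac{\pi(\widetilde{X}_{n+1})}{\pi(\widetilde{X}_n)}\Big) = \pi(\widetilde{X}_n)\wedge\pi(\widetilde{X}_{n+1}) = \pi(\widetilde{X}_{n+1})\Big(1\wedge\tfrac{\pi(\widetilde{X}_n)}{\pi(\widetilde{X}_{n+1})}\Big),
\end{equation}
whose left- and right-hand sides are symmetric in the two states. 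This yields the detailed balance equation claimed in the proposition.

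Finally, from reversibility I would obtain invariance: integrating the detailed balance identity over $\widetilde{X}_n$ and using $\int\mathcal{K}(\widetilde{X}_{n+1},d\widetilde{X}_n)=1$ gives $\int\pi(d\widetilde{X}_n)\mathcal{K}(\widetilde{X}_n,d\widetilde{X}_{n+1})=\pi(d\widetilde{X}_{n+1})$, i.e. $\pi\mathcal{K}=\pi$. Uniqueness of this invariant law then follows from the irreducibility of $\mathcal{K}$ established in the previous lemma.

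I expect the main obstacle to be the careful bookkeeping of the momentum reversal in the acceptance term: because the proposal is a deterministic, measure-degenerate map rather than an absolutely continuous kernel, the detailed balance identity is cleanest to verify in weak form, that is, tested against bounded measurable functions $g(\widetilde{X}_n,\widetilde{X}_{n+1})$, so that the change of variables and the involution $P\mapsto -P$ can be applied rigorously under the integral. The even symmetry of the kinetic energy and the unit Jacobian are precisely what make the pushforward of $\pi$ under $\Xi_n$ coincide with $\pi$, and making this matching precise --- rather than the symmetric minimum identity, which is routine --- is where the real content lies.
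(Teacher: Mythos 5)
Your proposal is correct, and it rests on the same skeleton as the paper's own proof: the Metropolis--Hastings kernel decomposed into an acceptance part supported on the image of the flow $\Xi_n$ and a diagonal rejection part. The difference lies in what is actually established and how the key symmetry is justified. The paper's proof verifies only the invariance identity $\int \pi(x)\mathcal{K}(x,B)\,dx=\int_B\pi(y)\,dy$ by Fubini, and in passing from its second to its third line it silently replaces $\pi(x)\alpha(x,y)\Xi_n(x)$ by $\pi(y)\alpha(y,x)\Xi_n(y)$ inside the double integral --- that is, it \emph{assumes} exactly the detailed-balance relation that the proposition asserts, without deriving it. You, by contrast, prove the stated detailed-balance equation itself: the rejection part is trivially symmetric on the diagonal, and for the acceptance part you obtain the swap from the minimum identity
\begin{equation*}
\pi(x)\Bigl(1\wedge\tfrac{\pi(y)}{\pi(x)}\Bigr)=\pi(x)\wedge\pi(y)=\pi(y)\Bigl(1\wedge\tfrac{\pi(x)}{\pi(y)}\Bigr)
\end{equation*}
combined with the unit Jacobian of the leapfrog map, its reversibility up to momentum negation, and the evenness of the kinetic energy (so that $\pi$ is invariant under $P\mapsto -P$); invariance then follows by integration, as in your last step. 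This is both more faithful to the statement, which claims reversibility rather than mere invariance, and it fills the logical gap in the paper's argument. Your closing observation is also well taken: because the proposal is a deterministic, measure-degenerate map, the paper's notation $\alpha(x,y)\Xi_n(x)\,dy$ treats the pushforward of a Dirac mass as if it had a density, and the weak formulation you suggest (testing against bounded measurable functions of both arguments) is the clean way to make the change of variables and the momentum flip rigorous.
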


\begin{proof}
Rewrite the kernel $\mathcal{K}$ as:
\begin{equation}
\mathcal{K}(x,dy)=\alpha(x,y)\Xi_n(x)dy+\mathbb{a}(x) \delta_{x}(dy)
\end{equation}
where
\begin{equation}
\mathbb{a}(x)=1-\int \alpha(x,z)\Xi_n(x)dz
\end{equation}

\begin{multline}
\int \mathcal{K}(x,B)\pi(x)dx= \int \left[\int_B \alpha(x,y)\Xi_n(x)dy \right]\pi(x)dx+\int \mathbb{a}(x)\delta_x(B)\pi(x)dx=\\=\int_B \left[\int \pi(x)\alpha(x,y)\Xi_n(x)dx \right]dy+\int_B \mathbb{a}(x)\pi(x)dx=\\=\int_B \left[\int \pi(y)\alpha(y,x)\Xi_n(y)dx \right]dy+\int_B \mathbb{a}(x)\pi(x)dx
\\=\int_B \pi(y)(1-\mathbb{a}(y))dy+\int_B \mathbb{a}(x)\pi(x)dx=\int_B \pi(y)dy
\end{multline}
\end{proof}


\begin{corollary}
The kernel $\mathcal{K}$ satisfies reversibility condition with an indicator function of the rare event set:
\begin{equation} \label{cor1}
\pi(x)\mathcal{K}(x,y)\mathbb{1}_{x\in A}=\pi(y)\mathcal{K}(y,x)\mathbb{1}_{y\in A}
\end{equation}
\end{corollary}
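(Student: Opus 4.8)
The plan is to obtain the corollary as a direct localization of the reversibility already established in the Proposition, i.e. of the detailed balance relation $\pi(dx)\,\mathcal{K}(x,dy)=\pi(dy)\,\mathcal{K}(y,dx)$, to the rare-event set $A$. First I would recall the decomposition of the kernel used in the proof of the Proposition,
\[
\mathcal{K}(x,dy)=\alpha(x,y)\,\Xi_n(x)\,dy+\mathbb{a}(x)\,\delta_x(dy),
\]
and treat the two pieces separately, since the indicators $\mathbb{1}_{x\in A}$ and $\mathbb{1}_{y\in A}$ interact differently with the absolutely continuous (accepted-move) part and with the atom on the diagonal (rejection part).

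For the rejection part, the Dirac mass $\delta_x(dy)$ forces $x=y$, so the two indicators coincide on its support and the identity $\pi(x)\mathbb{a}(x)\delta_x(dy)\,\mathbb{1}_{x\in A}=\pi(y)\mathbb{a}(y)\delta_y(dx)\,\mathbb{1}_{y\in A}$ holds trivially. For the accepted-move part I would invoke the pointwise symmetry that underlies the Proposition, namely $\pi(x)\alpha(x,y)\Xi_n(x)=\pi(y)\alpha(y,x)\Xi_n(y)$ (this is exactly the substitution $\int\pi(x)\alpha(x,y)\Xi_n(x)\,dx=\int\pi(y)\alpha(y,x)\Xi_n(y)\,dx$ made there). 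Carrying the symmetric factor $\mathbb{1}_{x\in A}\mathbb{1}_{y\in A}$ through this identity then gives reversibility of the killed kernel restricted to $A\times A$ immediately, so the content of (\ref{cor1}) with the \emph{product} indicator is essentially formal.

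The remaining, genuinely delicate, point is reconciling the \emph{single} indicators in (\ref{cor1}): I must show the boundary-crossing contribution is negligible, i.e. $\pi(x)\mathcal{K}(x,y)\,\mathbb{1}_{x\in A}\,\mathbb{1}_{y\notin A}=0$ in the $\pi$-a.e. sense, so that $\mathbb{1}_{x\in A}$ may be replaced by $\mathbb{1}_{x\in A}\mathbb{1}_{y\in A}$ and the problem reduces to the symmetric case above. Here I would appeal to the construction of the algorithm, in which a particle reaching $A$ is absorbed (the "set to $0$" killing step) and the leap-frog flow $\Xi_n$ evolves on the constant-energy submanifold $T^{*}\mathbi{M}(E_0)$; together these confine the accepted transitions so that $\mathcal{K}$ does not connect $A$ to its complement with positive $\pi$-mass. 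I expect this invariance/confinement of $A$ under the restricted dynamics to be the main obstacle, since it is the only step that is not a purely formal manipulation of the detailed balance identity. Once it is in place, combining the diagonal part, the symmetric off-diagonal part, and the vanishing crossing term yields (\ref{cor1}).
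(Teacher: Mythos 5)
Your decomposition and your first two steps reproduce what the paper actually intends: the paper gives no proof of this corollary at all, treating it as an immediate consequence of the detailed-balance relation in the preceding Proposition, and the way the corollary is subsequently used (in the invariance proof for the rare-event kernel $\mathcal{M}$) only requires the product-indicator identity
\begin{equation*}
\pi(x)\mathcal{K}(x,y)\,\mathbb{1}_{x\in A}\mathbb{1}_{y\in A}=\pi(y)\mathcal{K}(y,x)\,\mathbb{1}_{y\in A}\mathbb{1}_{x\in A},
\end{equation*}
which, as you say, follows formally by multiplying the pointwise symmetry $\pi(x)\alpha(x,y)\Xi_n(x)=\pi(y)\alpha(y,x)\Xi_n(y)$ of the accepted-move part by the symmetric factor $\mathbb{1}_{x\in A}\mathbb{1}_{y\in A}$, and by noting that the rejection atom only charges the diagonal $x=y$, where the two indicators coincide.

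The genuine gap is your third step. The crossing term $\pi(x)\mathcal{K}(x,y)\mathbb{1}_{x\in A}\mathbb{1}_{y\notin A}$ does \emph{not} vanish, and neither of the two mechanisms you invoke applies to $\mathcal{K}$. The ``set to $0$'' absorption belongs to the algorithm and to the kernel $\mathcal{M}$, which the paper defines only \emph{after} this corollary, precisely in order to block transitions out of $A_n$; reading it into $\mathcal{K}$ is circular, since the corollary is the ingredient used to analyze $\mathcal{M}$. Energy conservation does not confine the dynamics to $A$ either: $A$ is a threshold set in the position variable (the asset staying above a barrier), not a union of level sets of $\mathcal{H}$, and the momentum is freshly resampled at each time step, so accepted Hamiltonian proposals cross the barrier with positive probability --- indeed, if $\mathcal{K}$ could not connect $A$ to $A^{c}$, the entire rare-event construction ($\mathcal{M}$, the killing step, the indicators in the estimator) would be pointless. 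Concretely, detailed balance gives that the difference of the two sides of the claimed identity equals $\pi(x)\mathcal{K}(x,y)\left(\mathbb{1}_{x\in A}-\mathbb{1}_{y\in A}\right)$, which is nonzero exactly on the crossing set, so the single-indicator statement, read literally, is false for $\mathcal{K}$ and cannot be rescued by any confinement argument. The defensible conclusion --- and all that the paper's later proof uses --- is the killed, product-indicator form that your first two steps already establish; the right move is to prove (and state) the corollary in that form rather than attempt the confinement claim.
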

Now we can define rare event transitions  through the kernel $\mathcal{M}$. 
\begin{definition}
Assume that the assumption \ref{ass1} holds and consider a Markov Chain $(X_n^{(m)})_{n\geq 1}$ with an initial prior $p_1(X_1)$ and define the following transition kernel $p(\widetilde{X}^{(m)}_{n+1}\in d\widetilde{X}^{(m)}_{n+1}|\widetilde{X}^{(m)}_{n})=\mathcal{M}(\widetilde{X}^{(m)}_{n},d\widetilde{X}^{(m)}_{n+1})$.
\begin{equation}
\mathcal{M}(\widetilde{X}^{(m)}_{n},d\widetilde{X}^{(m)}_{n+1})=\mathcal{K}(\widetilde{X}^{(m)}_{n},d\widetilde{X}^{(m)}_{n+1})\mathbb{1}_{\mathcal{K}(\widetilde{X}^{(m)}_{n},d\widetilde{X}^{(m)}_{n+1}) \in A_n}+\widetilde{X}^{(m)}_{n}\mathbb{1}_{\mathcal{K}(\widetilde{X}^{(m)}_{n},d\widetilde{X}^{(m)}_{n+1}) \not\in A_n}
\end{equation}
\end{definition}

It means that the point $\widetilde{X}^{(m)}_{n}$ moves to a new point $\widetilde{X}^{(m)}_{n+1}$ only if it is inside a rare event set $A_n$, otherwise we stay at point $\widetilde{X}^{(m)}_{n}$.

\begin{proposition}
\hspace{1.5ex} Let $n=1,...,n_t$. The Markov chain $X_n$ is invariant under the kernel $\mathcal{M}(\cdot,dX_{n+1})$.
\end{proposition}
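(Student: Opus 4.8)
The plan is to establish invariance by verifying a detailed balance relation for $\mathcal{M}$ with respect to $\pi$ and then integrating, mirroring the argument already used for $\mathcal{K}$ in the preceding proposition. First I would rewrite $\mathcal{M}$ as the sum of an \emph{accepted-move} part supported off the diagonal and a \emph{stay} part supported on the diagonal, namely
\[
\mathcal{M}(x,dy)=\mathbb{1}_{y\in A}\,\mathcal{K}(x,dy)+\left(\int_{A^{c}}\mathcal{K}(x,dz)\right)\delta_{x}(dy),
\]
so that the point-mass term collects exactly the probability that the $\mathcal{K}$-proposal lands outside the rare-event set $A$ and the chain remains at $x$. This makes precise the informal definition of $\mathcal{M}$, in which the point moves via $\mathcal{K}$ only when the proposal is admissible and otherwise stays put.

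Next I would check the balance relation term by term. For the accepted-move part the key input is the corollary, eq. (\ref{cor1}): the restriction of the reversible kernel $\mathcal{K}$ to $A$ still satisfies $\pi(x)\mathcal{K}(x,y)\mathbb{1}_{x\in A}=\pi(y)\mathcal{K}(y,x)\mathbb{1}_{y\in A}$. Working on the set $A$ on which the conditioned chain lives, this yields directly $\pi(x)\mathbb{1}_{y\in A}\mathcal{K}(x,dy)=\pi(y)\mathbb{1}_{x\in A}\mathcal{K}(y,dx)$, the symmetry needed for the off-diagonal contribution. The stay part is concentrated on $\{y=x\}$, where any measure of the form $g(x)\delta_{x}(dy)$ is trivially reversible, so it contributes symmetrically to both sides.

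With these two ingredients I would conclude invariance by the same computation as in the previous proof: integrate $\pi(x)\mathcal{M}(x,B)$ over $x$, swap $x$ and $y$ in the accepted-move term using the symmetry above, and recombine with the stay term. The mass that is rejected (proposed outside $A$) and returned to $x$ exactly complements the accepted mass, so that the two contributions reassemble into $\int_{B}\pi(y)\,dy=\pi(B)$, establishing $\int\pi(dx)\,\mathcal{M}(x,B)=\pi(B)$ and hence that $\pi$ is invariant under $\mathcal{M}$.

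I expect the main obstacle to be the bookkeeping of the rejection/stay term and its interaction with the indicator $\mathbb{1}_{y\in A}$: one must verify that the probability mass sent back to the diagonal via $\delta_{x}$ matches on both sides of the balance equation and telescopes correctly with the accepted part, exactly as the weight $\mathbb{a}(x)$ did in the proof for $\mathcal{K}$. A secondary subtlety is reading the condition $\mathcal{K}(\widetilde{X}_{n},d\widetilde{X}_{n+1})\in A_{n}$ in the definition of $\mathcal{M}$ as the event that the \emph{proposed} point lies in $A_{n}$, which is what legitimises the decomposition above.
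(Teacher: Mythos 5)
Your decomposition of $\mathcal{M}$ into an accepted-move part plus a stay part, and your appeal to the corollary (\ref{cor1}), are exactly the ingredients of the paper's proof; however, the argument as written contains a genuine error. The symmetry you claim for the accepted-move part, $\pi(x)\mathbb{1}_{y\in A}\mathcal{K}(x,dy)=\pi(y)\mathbb{1}_{x\in A}\mathcal{K}(y,dx)$, does not follow from (\ref{cor1}) and is false whenever exactly one of $x,y$ lies in $A$: for $x\in A^{c}$ and $y\in A$ the left-hand side equals $\pi(x)\mathcal{K}(x,dy)$, which is generally positive, while the right-hand side vanishes. The corollary only gives the balance relation with \emph{both} indicators present, i.e.\ on $A\times A$. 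As a consequence, your final claim $\int\pi(dx)\,\mathcal{M}(x,B)=\pi(B)$ is false in general: take $B\subseteq A$ with $\pi(B)>0$; since $\mathcal{K}$ is $\pi$-invariant,
\begin{equation*}
\int\pi(dx)\,\mathcal{M}(x,B)=\int\pi(dx)\,\mathcal{K}(x,B)+\int_{B}\pi(dx)\,\mathcal{K}(x,A^{c})=\pi(B)+\int_{B}\pi(dx)\,\mathcal{K}(x,A^{c})>\pi(B)
\end{equation*}
whenever the rejection probability $\mathcal{K}(\cdot,A^{c})$ is positive on part of $B$. Intuitively, $\mathcal{M}$ moves mass into the rare-event set but never out of it, so the full measure $\pi$ cannot be stationary.

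What is true---and what the paper actually proves---is invariance of $\pi$ \emph{restricted to} $A$: the paper carries the indicator $\mathbb{1}_{x\in A}$ on the initial measure through the entire computation. With that restriction your own argument goes through verbatim, because both indicators are then present and (\ref{cor1}) applies legitimately to the accepted-move term, giving
\begin{equation*}
\int\pi(dx)\mathbb{1}_{x\in A}\,\mathcal{M}(x,dy)=\pi(dy)\mathbb{1}_{y\in A}\left[\mathcal{K}(y,A)+\mathcal{K}(y,A^{c})\right]=\pi(dy)\mathbb{1}_{y\in A}.
\end{equation*}
Your remark that one should ``work on the set $A$ on which the conditioned chain lives'' is the right instinct; the fix is to keep that restriction as part of the statement being proved (the invariant measure is $\pi\mathbb{1}_{A}$, suitably normalized), rather than dropping it in the conclusion.
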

\begin{proof}
\begin{multline}
\int \pi(dx)\mathcal{M}(x,dy)\mathbb{1}_{x\in A}=
\int \pi(dx)\left[ K(x,y)\mathbb{1}_{x\in A}+K(x,A^c)\delta_x(dy)\right]\mathbb{1}_{x\in A}=\\=\int \int \pi(dx)K(x,dz)\left[ \mathbb{1}_{z\in A}\delta_z(dy)+\mathbb{1}_{A^c}(z)\delta_x(dy)\right]\mathbb{1}_{x\in A}=\\=\int  \pi(dx)K(x,dy) \mathbb{1}_{y\in A}\mathbb{1}_{x\in A}+\int \pi(dy)K(y,dz)\mathbb{1}_{A^c}(y)\mathbb{1}_{x\in A}=\pi(dy)\mathbb{1}_{y\in A}
\end{multline}
\end{proof}

Invariance of $\widetilde{X}_n^{(m)}$ says that for any bounded and measurable function $f$, the distribution of $f(\mathcal{M}(\widetilde{X}^{(m)}_{n},d\widetilde{X}^{(m)}_{n+1}))$ and $f(\widetilde{X}^{(m)}_{n})$ is the same.

\begin{equation}
\mathbb{E}[f(\mathcal{M}(\widetilde{X}^{(m)}_{n},d\widetilde{X}^{(m)}_{n+1}))]=\mathbb{E}[f(\widetilde{X}^{(m)}_{n})]
\end{equation}


Under the kernel $\mathcal{M}$ of $\widetilde{X}_n$, the final HFMC estimate is given by:
\begin{equation} \label{HFMC}
\widehat{C}^{HFMC}=\frac{1}{n_S}\sum_{m=1}^{n_S} f(X_{n_t}^{(m)})  \mathbb{1}_{\{X_{n+1}^{(m)}, X_{n}^{(m)} \in A_n\}}
\end{equation} 

\subsection{Algorithm}
\hspace{3ex} The Hamiltonian function is defined by $\mathcal{H}(X,P)=\Psi(X)+\frac{1}{2}P^T \mathfrak{M}^{-1}P$, where $\Psi(X)$ - is a potential energy function, and the second term is a kinetic energy function with a momentum variable $P$ and mass matrix $\mathfrak{M}$. Usually one sets a mass matrix $\mathfrak{M}$ to be an identity matrix $I$. The proposed samples are obtained by a physical simulation of the Hamiltonian dynamics:
\begin{equation}
\left\{ \begin{array}{c}
dX_u = \mathfrak{M}^{-1}P_udu \\
dP_u = -\nabla \Psi(X_u)du
\end{array} \right.
\end{equation}

We start by simulating $M$ random variables from  a prior $X_1=p_0(\cdot,X_{0}^{(m)})$, which is the density of the underlying SDE and generating $M$ random variables from gaussian distribution for momentum $\{P_0^{(m)}\}_{m=1}^M$.

For each step $n=1,...,N$ we set $x_H^{(m)}=X_n^{(m)}$, $P_H^{(m)}=P_n^{(m)}$. The proposed new candidates are obtained after $L$-leapfrog steps of the simulation of the Hamiltonian dynamics and they are defined by $x^*=x^{(m)}_H(L)$ and $P^*=P^{(m)}_H(L)$. These new set of proposed candidates are then accepted according to the following Metropolis-Hastings test. First generate uniformly distributed random variable $\mathfrak{u}\sim \mathcal{U}nif(0,1)$, then compute $\alpha$:

\begin{equation}
\alpha = 1 \wedge e^{(-\mathcal{H}(x^*,P^*)+\mathcal{H}(x_{H}^{(m)},P_{H}^{(m)}))\Delta t};
\end{equation}

If proposed candidates $(x^*,P^*)$ are accepted, i.e. $\alpha>\mathfrak{u}$ we set $X^{(m)}_{n+1}=x^*$, and if they are rejected, i.e. $\alpha \leq u$, we set $X^{(m)}_{n+1}=x^{(m)}_H$. At the end, calculate estimator in (\ref{HFMC}). The main steps of the algorithm are summarized in Algorithm 2.

The Metropolis-Hastings test insures a volume preservation. That explains the fact that we don't need to compute a normalizing constant in our algorithm. Volume preservation means that the absolute value of the Jacobian matrix of the leapfrog integrator is equal to one, this is because candidates are proposed though simulation of the Hamiltonian flow.

\begin{algorithm}[H] 
Initialization: $n_S$ - $\#$(simulations), $n_t$ - $\#$(time steps) \\

\For{$n=1,...,n_t$}{
\For{$m=1,...n_S$}{
Generate $X_n^{(m)}$ from prior $\widetilde{p}(X_0^{(m)},\cdot)$;\\
Simulate initial momentum $P^{(m)}_1\sim \mathcal{N}(0,I_M)$, set $x^{(m)}_H=X_{n}^{(m)}$ and run the Hamiltonian flow:\\
\For{$l_f=1,...L-1$}{
$\begin{array}{c} P_H^{(m)}((l_f+\frac{1}{2})\delta)=P_H^{(m)}(l_f)-\frac{\delta}{2}\frac{\partial \Psi}{\partial x_H} (x^{(m)}_H(l_f)) \\ 
x_H^{(m)} ((l_f+1)\delta)=x_H^{(m)}(l_f)+\delta P_H^{(m)}((l_f+\frac{1}{2})\delta)I_M^{-1} \\ 
P_H^{(m)}((l_f+1)\delta))=P_H^{(m)}((l_f+\frac{1}{2})\delta)-\frac{\delta}{2}\frac{\partial \Psi}{\partial x_H}(x^{(m)}_H((l_f+1)\delta))
\end{array}$
}
Calculate acceptance probability and set $x^*=x_H^{(m)}(L)$, $P^*=P_H^{(m)}(L)$:
\begin{equation}
a = 1 \wedge e^{(-\mathcal{H}(x^*,P^*)+\mathcal{H}(x_{H}^{(m)},P_{H}^{(m)}))\Delta t};
\end{equation}
Draw $\mathfrak{u}\sim \mathcal{U}$nif$(0,1)$;\\
\If{$\mathfrak{u}<a$}{
Set $X_{n+1}^{(m)}=x^*$;\\
\Else{Reject, and set $X_{n+1}^{(m)}=x_H^{(m)}$}
}

\If{$X^{m}_n, X^{m}_{n+1}  \in A$}{
Set $X^{m}_n=0, X^{m}_{n+1}=0$
}

}
}
Compute:
\begin{equation*}
\widehat{C}^{HFMC}=\frac{1}{n_S}\sum_{m=1}^{n_S} \left(f(X_{n_t}^{(m)})\prod_{n=1}^{n_t} e^{(-\mathcal{H}(X^{(m)}_{n+1},P^{(m)}_{n+1})+\mathcal{H}(X_{n}^{(m)},P_{n}^{(m)})\Delta t} \mathbb{1}_{X_{n}^{(m)},X_{n+1}^{(m)} \in A_n} \right)
\end{equation*}

    \caption{Hamiltonian Flow Monte Carlo in Rare event setting}
\end{algorithm}

\section{Convergence Analysis}
\subsection{IPS convergence}
IPS convergence, and in particular the asymptotic behaviour as number of particles $n_S\rightarrow \infty$ was thoroughly studied in \cite{DM2004}.

The following result given in \cite{CDG11} allows a non asymptotic control of variance of the rare event probability. 
\begin{assumption} \label{ass4.1}
\begin{equation}
\widetilde{\delta}_n:=\sup_{x,y}\frac{\omega_n(x)}{\omega_n(y)}<+\infty
\end{equation}
\end{assumption}
\begin{theorem}
When the assumption (\ref{ass4.1}) is met for some $\widetilde{\delta}_n$, we have the nonasymptotic estimates:
\begin{equation}
\mathbb{E}\left[\left| \frac{C^{IPC}}{C} -1\right|^2 \right]\leq \frac{4}{n_S}\sum_{s=1}^{n_t}\frac{\widehat{\delta}_s^{(n_t)}}{p_k}
\end{equation}
where $\widehat{\delta}_s^{(n_t)}=\prod_{s\leq k <s+n_t}\widetilde{\delta}_k$
\end{theorem}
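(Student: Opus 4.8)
The plan is to recognize $C=\mathbb{E}^{\mathbb{P}}[f(X_{n_t})\prod_{n=1}^{n_t}\mathbb{1}_{X_n\in A_n}]$ as an unnormalized Feynman--Kac expectation whose potentials are the weights $\omega_n$ killed off the rare-event sets $A_n$, and to identify $\widehat{C}^{IPS}$ with its $n_S$-particle approximation produced by the mutation--selection scheme of Algorithm 1. The first result I would invoke is the unbiasedness of the unnormalized particle estimator, namely $\mathbb{E}[\widehat{C}^{IPS}]=C$. This follows by a backward induction on the time index, using the conditional independence of the $n_S$ particles given the resampled population at the previous stage, together with the fact that multinomial resampling leaves the unnormalized mass unbiased. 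This identity is what makes the relative quantity $\widehat{C}^{IPS}/C-1$ centred and amenable to an $L^2$ analysis.

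The core of the argument is a telescoping martingale decomposition of the relative error. I would introduce the Feynman--Kac semigroup $Q_{s,n_t}$ propagating a test function from stage $s$ to the terminal stage, together with a family of hybrid estimators in which the first $s$ transitions are taken exact and the remaining ones are the particle transitions. Denoting these hybrids $\gamma^{(s)}$, one writes
\begin{equation}
\frac{\widehat{C}^{IPS}}{C}-1=\sum_{s=1}^{n_t}\bigl(\gamma^{(s)}-\gamma^{(s-1)}\bigr),
\end{equation}
and each increment is, conditionally on the $\sigma$-field generated by the particles up to stage $s-1$, a centred Monte-Carlo fluctuation coming from the $n_S$ independent samples drawn at stage $s$. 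Hence the increments are orthogonal in $L^2$, and the total second moment splits into a sum of local variances, each carrying the $1/n_S$ factor inherited from averaging $n_S$ particles.

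It then remains to bound each local variance. The fluctuation at stage $s$ is that of a bounded functional --- the terminal payoff composed with the semigroup $Q_{s,n_t}$ --- so its conditional variance is controlled by the squared oscillation of the associated normalized test function, and it is through this elementary variance bound on the bounded increments that the constant $4$ enters. I would then control that oscillation by the stability of the Feynman--Kac semigroup: the decisive observation is that Assumption \ref{ass4.1} bounds the pointwise ratio $\omega_k(x)/\omega_k(y)$ by $\widetilde{\delta}_k$, which in turn bounds the distortion of $Q_{s,n_t}(1)$ relative to its mean by the product $\widehat{\delta}_s^{(n_t)}=\prod_{s\le k<s+n_t}\widetilde{\delta}_k$. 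Dividing by the normalizing mass, which is the rare-event probability $p_k$ at the relevant stage, yields the summand $\widehat{\delta}_s^{(n_t)}/p_k$, and summing over $s$ produces the stated estimate.

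The hard part will be the semigroup stability step: one must show rigorously that the propagation of a stage-$s$ local error to the terminal time is governed by the multiplicative constant $\prod_{s\le k<s+n_t}\widetilde{\delta}_k$, which requires tracking how the normalized potentials distort the unnormalized flow and verifying that the telescoping increments are genuinely $L^2$-orthogonal despite the dependence introduced by resampling. Everything else reduces to the conditional Monte-Carlo variance formula; the detailed bookkeeping of the normalizing constants $p_k$ and the verification of unbiasedness for the specific killing-on-$A_n$ mechanism of Algorithm 1 I would carry out along the lines of \cite{CDG11}.
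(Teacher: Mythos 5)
The paper gives no proof of this theorem at all --- the statement is quoted directly from the cited reference \cite{CDG11} --- and your sketch correctly reproduces the strategy of that reference's proof: unbiasedness of the unnormalized Feynman--Kac particle estimator, a telescoping decomposition of $\widehat{C}^{IPS}/C-1$ into conditionally centred, $L^2$-orthogonal increments, and a semigroup-stability bound on each local variance in terms of the potential-ratio constants $\widetilde{\delta}_k$ and the rare-event probabilities. Since your route coincides with the one the paper implicitly relies on by citation, there is nothing to flag beyond the fact that the detailed bookkeeping you defer to \cite{CDG11} is exactly what the paper defers as well.
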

\subsection{HFMC Convergence}

\subsubsection{LLN and Convergence Rate}
\hspace{3ex} 
Birkhoff ergodic theorem allows us have law of large numbers(LLN) like convergence. So, we  are interested in a sigma-algebra $\mathcal{G}$ of invariant events, in particular when $\mathcal{G}$ is trivial. 

From lemma 3.3 we know that the Markov chain generated by HFMC is irreducible, and we can see that the Markov Chain that we get from the rare event kernel $\mathcal{M}$ satisfies irreducibility conditions due to the fact that the transition density is always positive. Applying the results by \cite{MT2009}, we have:
\begin{proposition} \cite{MT2009}
Suppose that $\mathcal{M}$ is a $\pi$-irreducible Metropolis kernel. Then $\mathcal{M}$ is a Harris reccurent.
\end{proposition}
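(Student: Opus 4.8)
The plan is to follow the classical argument of Tierney and of Meyn--Tweedie \cite{MT2009}, which upgrades recurrence to Harris recurrence by exploiting the explicit ``propose / accept--reject'' structure of $\mathcal{M}$. First I would record the two ingredients already available in the excerpt: by the preceding Proposition the kernel admits $\pi$ as an invariant probability measure (detailed balance implies invariance), and by Lemma 3.3 the chain is $\pi$-irreducible. Combining $\pi$-irreducibility with the existence of an invariant \emph{probability} measure, the standard recurrence dichotomy \cite{MT2009} shows the chain is recurrent, and the Harris decomposition theorem yields a maximal absorbing set $H$ on which the restricted chain is Harris recurrent, together with its complement $N=H^{c}$ satisfying $\pi(N)=0$. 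Harris recurrence of the whole chain is then equivalent to the statement $N=\varnothing$.

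The core of the proof is to show $N=\varnothing$. Introduce the hitting probability $Q(x)=\mathbb{P}_x(\tau_H=\infty)$ that the chain started at $x$ never enters $H$, so that $N=\{Q>0\}$. Since $\{\tau_H=\infty\}$ is a tail event, the Markov property makes $Q$ harmonic, $Q=\mathcal{M}Q$, with $0\le Q\le 1$ and $Q\equiv 0$ on $H$; in particular $Q=0$ $\pi$-almost everywhere because $\pi(N)=0$. Writing the Metropolis kernel in its canonical form
\begin{equation*}
\mathcal{M}(x,dy)=\alpha(x,y)\,q(x,dy)+r(x)\,\delta_x(dy),\qquad r(x)=1-\int\alpha(x,z)\,q(x,dz),
\end{equation*}
the harmonicity identity $Q=\mathcal{M}Q$ collapses to
\begin{equation*}
\int \alpha(x,y)\,q(x,dy)\,\bigl[Q(x)-Q(y)\bigr]=0 .
\end{equation*}

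The decisive structural fact I would use is that the accepted-move sub-kernel $\alpha(x,y)\,q(x,dy)$ is absolutely continuous with respect to $\pi$; this holds here because the proposal has a density against Lebesgue measure and $\pi\propto e^{-\beta\mathcal{H}}$ is strictly positive, and it is the very mechanism by which $\alpha$ enforces reversibility in the Proposition above. Consequently $\int \alpha(x,y)\,q(x,dy)\,Q(y)=0$ since $Q$ vanishes $\pi$-a.e., and the displayed identity reduces to $Q(x)\int\alpha(x,y)\,q(x,dy)=0$ for every $x$. Hence at each $x$ either $Q(x)=0$, or the total acceptance mass $\int\alpha(x,y)\,q(x,dy)$ vanishes; but in the latter case $\mathcal{M}(x,\cdot)=\delta_x$, so $x$ reaches no set of positive $\pi$-measure, contradicting $\pi$-irreducibility. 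Therefore $Q\equiv 0$, i.e. $N=\varnothing$, and $\mathcal{M}$ is Harris recurrent.

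I expect the main obstacle to be the absolute-continuity step together with the treatment of the exceptional ``always reject'' states: one must verify that the accepted-move part of $\mathcal{M}$ is genuinely dominated by $\pi$ (relying on the reversibility established earlier and on the proposal being compatible with $\pi$ under Assumption \ref{ass1}), and that the set where $\int\alpha\,q=0$ is empty or else incompatible with $\pi$-irreducibility. Establishing that $Q$ is measurable and harmonic on the whole space, rather than merely $\pi$-a.e., also deserves care, since the conclusion $N=\varnothing$ concerns \emph{every} initial point, not only $\pi$-almost every one.
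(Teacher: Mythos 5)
The paper gives no proof of this proposition at all: it is stated verbatim with a citation to Meyn--Tweedie \cite{MT2009} (the result is in fact Tierney's Corollary~2 in \cite{T1994}), so there is no internal argument to compare yours against, and your reconstruction is precisely the classical proof behind that citation --- recurrence from $\pi$-irreducibility plus an invariant probability measure, the Harris decomposition into $H\cup N$ with $\pi(N)=0$, harmonicity of the escape probability $Q$, and the collapse of $Q=\mathcal{M}Q$ using absolute continuity of the accepted-move sub-kernel with respect to $\pi$, with ``always-reject'' states excluded by irreducibility. The argument is sound, and the caveat you flag yourself --- that in this paper's setting the proposal (Gaussian momentum pushed through the leapfrog map) must genuinely admit a Lebesgue density so that the accepted-move part is dominated by $\pi$ --- is indeed the only point where the abstract Metropolis argument requires verification against the specific Hamiltonian kernel.
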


\begin{proposition} \cite{MT2009}
If $\mathcal{M}$ is positive Harris and aperiodic then for every initial distribution $\lambda$:
\begin{equation}
||\int \lambda(dx)(\mathcal{M})^l(x,\cdot)-\pi||_{TV}\rightarrow 0, \ \  l\rightarrow \infty
\end{equation}
for $\pi$ almost all $x$.
\end{proposition}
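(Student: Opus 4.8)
The plan is to prove convergence in total variation by the coupling method, combining the $\pi$-irreducibility and positive Harris recurrence already established for $\mathcal{M}$ with the aperiodicity hypothesis and the uniqueness of the invariant law $\pi$. The coupling approach is the natural one here because it directly controls $\|\lambda\mathcal{M}^l-\pi\|_{TV}$ by the tail of a single random time, and it accommodates the general (non-atomic) state space through a regeneration device.

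First I would extract the regeneration structure guaranteed by $\pi$-irreducibility: there exist a small (petite) set $C$ with $\pi(C)>0$, an integer $m_0\geq 1$, a constant $\varepsilon\in(0,1)$, and a probability measure $\nu$ concentrated on $C$ such that the minorization condition
\begin{equation}
\mathcal{M}^{m_0}(x,\cdot)\geq \varepsilon\,\nu(\cdot),\qquad x\in C,
\end{equation}
holds. Aperiodicity of $\mathcal{M}$ ensures that the return times to $C$ are not trapped in a cyclic subclass, so after replacing $\mathcal{M}$ by a fixed power (or passing to the associated resolvent) we may take $m_0=1$ and treat successive visits to $C$ as genuine regeneration opportunities.

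Next I would invoke the Nummelin splitting construction on the enlarged space $\mathbf{M}\times\{0,1\}$. The minorization splits each transition out of $C$ into a regeneration component of weight $\varepsilon$ that resamples the next state from $\nu$ independently of the past, together with a residual component; this creates an accessible atom $\alpha=C\times\{1\}$ at which the chain forgets its history. I would then run two coupled copies of the split chain, one started from the arbitrary law $\lambda$ and the other from the stationary law $\pi$, declaring them coupled at the first epoch $T$ at which both simultaneously regenerate from $\nu$. The coupling inequality then gives
\begin{equation}
\Big\| \int \lambda(dx)\,\mathcal{M}^{l}(x,\cdot)-\pi\Big\|_{TV}\leq \mathbb{P}(T>l),
\end{equation}
so it suffices to show $\mathbb{P}(T<\infty)=1$; specializing $\lambda=\delta_x$ recovers the pointwise statement for $\pi$-almost every $x$.

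The hard part will be establishing that the coupling time $T$ is almost surely finite. Positive Harris recurrence supplies finite expected return times to the small set $C$, which forces each chain to visit $C$ infinitely often, while aperiodicity guarantees that the simultaneous-regeneration events at these visits are not excluded by a periodicity obstruction and each occurs with probability at least $\varepsilon^2$. A renewal and conditional Borel--Cantelli argument over the successive joint visits to $C$ then yields that the two chains eventually regenerate together with probability one, so $\mathbb{P}(T>l)\to 0$ and the total variation distance vanishes as $l\to\infty$. The delicate technical point is synchronizing the two chains' visits to $C$ in the general state space, which is exactly what the Nummelin splitting and the aperiodic renewal theory are designed to handle.
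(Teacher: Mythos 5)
The paper offers no proof of this proposition at all — it is stated as a known result with a citation to Meyn and Tweedie, whose Chapter 13 ``aperiodic ergodic theorem'' is proved by exactly the regeneration/splitting and coupling machinery you outline. Your sketch is a faithful and essentially correct reconstruction of that standard argument (minorization on a small set, Nummelin splitting to create an atom, coupling inequality, and aperiodic renewal theory to make the coupling time almost surely finite), so it matches the approach of the paper's cited source; the only refinement worth noting is that for a positive Harris chain the conclusion in fact holds for every initial point $x$, not merely $\pi$-almost all $x$, and the reduction from the $m_0$-skeleton back to the original chain should invoke the monotonicity of total variation distance under the Markov kernel.
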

where $||\cdot||_{TV}$ is a total variation distance.



\subsubsection{Geometric Ergodicity}
To establish central limit theorem (CLT), we need a geometric ergodicity of the chain.

\begin{definition}
A subset $C$ of that state space $(\mathbb{R}^{n_S}, \mathcal{B}(\mathbb{R}^{n_S}))$ is petite if there exists a non-zero positive measure $\nu$ on the state space and subsampling distribution $q$  such that
\begin{equation}
\mathcal{K}_q(x,A)\geq \nu(A), \ \forall A \in \mathcal{B}(\mathbb{R}^{n_S}) \ \mbox{and} \ x\in C
\end{equation}
\end{definition}

\begin{definition}
A subset $C$ of that state space $(\mathbb{R}^{n_S}, \mathcal{B}(\mathbb{R}^{n_S}))$ is small if there exists a non-zero positive measure $\nu$ on the state space and real-valued number $l\in \mathbb{R}$ such that
\begin{equation}
\mathcal{K}^l(x,A)\geq \nu(A), \ \forall A \in \mathcal{B}(\mathbb{R}^{n_S}) \ \mbox{and} \ x\in C
\end{equation}
\end{definition}

Observe, that every small set is petite.
\begin{theorem}
Suppose for an irreducible, aperiodic Markov chain having transition probability kernel $\mathcal{K}$ and a state space $\mathbb{R}^{n_S}$, there exists a petite set $C$ areal valued function $V$, satisfying $v\geq 1$, and constants $b<\infty$ and $\lambda<1$ such that
\begin{equation} \label{geom}
\mathcal{K}V(x)\leq \lambda V(x)+b\mathbb{1}_C(x), \forall x \in \mathbb{R}^{n_S} 
\end{equation}
holds. Then the chain is geometrically ergodic.
\end{theorem}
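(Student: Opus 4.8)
The plan is to follow the Foster--Lyapunov route: condition (\ref{geom}) is the standard geometric drift condition, and I would show it forces the chain to return to the set $C$ with geometrically light tails, and then exploit the petiteness of $C$ to couple two copies of the chain and read off a geometric rate in total variation.

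First I would convert the one-step drift into return-time control. Define the first return time $\tau_C=\inf\{n\ge 1:X_n\in C\}$. Away from $C$ the inequality (\ref{geom}) reads $\mathcal{K}V(x)\le\lambda V(x)$, so the process $M_n=\lambda^{-n}V(X_n)$, stopped at $\tau_C$, is a nonnegative supermartingale under $\mathbb{P}_x$: for $X_n=x\notin C$ one has $\mathbb{E}[M_{n+1}\mid\mathcal{F}_n]\le\lambda^{-(n+1)}\lambda V(X_n)=M_n$. Optional stopping together with $V\ge 1$ then yields a bound of the form $\mathbb{E}_x[\lambda^{-\tau_C}]\le C' V(x)$ for $x\notin C$, and the additive term $b\mathbb{1}_C$ controls the single excursion that begins inside $C$. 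Since $\lambda<1$, this is a genuine geometric moment of $\tau_C$, uniform in the starting point relative to $V$. In particular the chain is positive Harris recurrent, so the invariant probability $\pi$ exists and $V$ is $\pi$-integrable.

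Next I would upgrade petiteness to a genuine minorization. Because the chain is irreducible and aperiodic, a petite set is small (see \cite{MT2009}), so there exist $m\in\mathbb{N}$, $\varepsilon>0$ and a probability measure $\nu$ with $\mathcal{K}^m(x,\cdot)\ge\varepsilon\,\nu(\cdot)$ for every $x\in C$. I would then run the standard two-chain coupling, equivalently the Nummelin split-chain regeneration: propagate two copies, one started at $x$ and one at $\pi$, and attempt to couple whenever both coordinates lie in $C$. The geometric return-time estimate from the drift step shows the bivariate chain hits $C\times C$ with geometric tails, while the minorization supplies a fixed success probability $\varepsilon$ at each attempt; a geometric-trials / renewal argument then makes the coupling time $T$ satisfy $\mathbb{E}[\kappa^{T}]<\infty$ for some $\kappa>1$.

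Finally, the coupling inequality $\|\mathcal{K}^n(x,\cdot)-\pi\|_{TV}\le\mathbb{P}_x(T>n)$ combined with the geometric moment of $T$ gives $\|\mathcal{K}^n(x,\cdot)-\pi\|_{TV}\le R\,V(x)\,\rho^{\,n}$ for some $\rho<1$, which is precisely geometric ergodicity. The main obstacle I anticipate is this last coupling step: one must control the joint chain on $C\times C$ and splice the drift-based return estimate together with the minorization \emph{without losing the geometric rate}. Propagating geometric moments (rather than merely finite expectations) through the regeneration structure is the delicate bookkeeping, and it is exactly where aperiodicity and the smallness of $C$ are genuinely used.
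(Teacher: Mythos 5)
The paper never actually proves this statement: it is quoted as the classical geometric ergodic theorem of Meyn and Tweedie \cite{MT2009}, and the only computation the paper attaches to it is the short remark that taking $\pi$-expectations of (\ref{geom}) yields $\mathbb{E}^{\pi}[V(X)]\leq b\,\pi(C)/(1-\lambda)$, i.e.\ $\pi$-integrability of the drift function. So there is no internal proof to measure you against; what you have written is a proof sketch of the cited result itself. As a roadmap it is correct and it is the standard one: the stopped supermartingale $\lambda^{-n}V(X_n)$ does give $\mathbb{E}_x[\lambda^{-\tau_C}]\le c\,V(x)$, petite sets are indeed small for irreducible aperiodic chains, and the coupling inequality converts a geometric moment of the coupling time into geometric decay in total variation. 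Be aware, though, that your route (bivariate coupling with regeneration attempts on $C\times C$) is the Roberts--Rosenthal style argument, not the one in \cite{MT2009}, which proceeds via Nummelin splitting and Kendall's theorem on geometric renewal sequences; either is a legitimate way to fill the hole the paper leaves.

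The one step where your sketch conceals a genuine difficulty --- you half-acknowledge it --- is the claim that the joint chain hits $C\times C$ with geometrically light tails. The univariate drift (\ref{geom}) does not automatically yield a bivariate drift for $V(x)+V(y)$ outside $C\times C$: if $x\in C$ but $y\notin C$, the $x$-coordinate picks up the additive constant $b$ and the sum need not contract. The standard repairs are either to replace $C$ by a sublevel set $\{V\le d\}$ with $d$ large enough that $\lambda+b/d<1$ (such level sets are petite under irreducibility together with the drift condition), or to run the argument on the split chain so that only univariate excursions must be controlled. A second piece of bookkeeping you gloss over is that the minorization furnished by smallness is for $\mathcal{K}^m$ while the drift is one-step, so the two must be spliced through the $m$-skeleton, and aperiodicity is what lets the skeleton inherit irreducibility. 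None of this is fatal --- it is precisely the content of the proofs in \cite{MT2009} --- but as written your proposal asserts the geometric moment of the coupling time at the point where it ought to be derived.
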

The function $V$ is called a geometric drift. Take the expectation of the both sides of (\ref{geom}) and using the invariance of measure $\pi$ with respect to the kernel $\mathcal{K}$:
\begin{equation}
\mathbb{E}^{\pi}[V(X)]\leq \frac{b\pi(C)}{(1-\lambda)}
\end{equation} 
In other words, for $\lambda \in (0,1]$ a function satisfying (\ref{geom}) is always $\pi$-integrable.
\begin{proposition}
Assume that there exist $\lambda\in [0,1)$ and $b\in\mathbb{R}_+$ such that
\begin{equation}
\mathcal{K} V\leq \lambda V+b
\end{equation}
and 
\begin{equation}
\lim\sup \mathcal{K}(x,\mathcal{R}(x)\cap\mathcal{B}(x))=0
\end{equation}
\end{proposition}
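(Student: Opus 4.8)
The plan is to verify the hypotheses of the geometric drift Theorem (the Foster--Lyapunov criterion stated just above) and then invoke it directly, so that the conclusion is geometric ergodicity of the chain governed by $\mathcal{K}$. Two ingredients are required: a localized drift inequality of the form $\mathcal{K}V(x)\leq \lambda' V(x)+b'\mathbb{1}_C(x)$ for a suitable sublevel set $C$, and the fact that $C$ is petite. The two stated assumptions furnish exactly these two ingredients.

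First I would upgrade the global drift $\mathcal{K}V\leq \lambda V+b$ to the localized form demanded by the Theorem. Since $V\geq 1$ and the overshoot constant $b$ is finite, define the sublevel set $C_R:=\{x:V(x)\leq R\}$. On the complement $C_R^c$ one has $V(x)>R$, hence $b<(b/R)V(x)$, so that $\mathcal{K}V(x)\leq(\lambda+b/R)V(x)$. Choosing $R$ large enough that $\lambda':=\lambda+b/R<1$ (possible because $\lambda<1$) gives the contraction outside $C_R$, while inside $C_R$ the bound $\mathcal{K}V(x)\leq\lambda V(x)+b\leq\lambda' V(x)+b'$ holds with $b':=b$. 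This yields precisely the geometric drift inequality.

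Next I would show that $C_R$ is petite. Here irreducibility from Lemma~3.3 enters: the kernel $\mathcal{K}$ charges every set of positive Lebesgue measure, so a nontrivial minorizing measure exists pointwise. The subtlety specific to the Metropolis kernel is the rejection atom $\mathbb{a}(x)\delta_x$, which obstructs a uniform lower bound $\mathcal{K}_q(x,A)\geq\nu(A)$ over $x\in C_R$. This is exactly the role of the second hypothesis $\lim\sup\mathcal{K}(x,\mathcal{R}(x)\cap\mathcal{B}(x))=0$: it forces the mass the kernel places on the rejection/boundary region to be asymptotically negligible on the compact sublevel set, so that the accepted-move component of $\mathcal{K}$ dominates and admits a common minorizing measure $\nu$. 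Consequently $C_R$ is small, and every small set is petite.

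With the localized drift and the petite set established, the Theorem applies and delivers geometric ergodicity, which in turn supplies the geometric mixing needed for the central limit theorem referenced at the start of this subsection. I expect the main obstacle to be the second step---establishing the minorization uniformly over $C_R$---because the rejection atom breaks the absolute continuity of the transition kernel; the $\lim\sup$ assumption is the lever that makes the rejection mass vanish asymptotically, so that the continuous accepted part carries the uniform lower bound.
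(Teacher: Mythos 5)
The first thing to note is that the paper does not actually prove this proposition: the statement (which is itself incomplete --- it lists two hypotheses but no conclusion, and never defines $\mathcal{R}(x)$, $\mathcal{B}(x)$, or what the $\limsup$ ranges over) is followed only by the remark that geometric ergodicity of the HFMC kernel was shown in the cited work of Durmus, Moulines and Saksman. So your proposal cannot be checked against an in-paper argument; it must stand on its own. Its skeleton is sensible: you correctly inferred that the intended conclusion is geometric ergodicity, and your first step --- upgrading the global drift $\mathcal{K}V\leq\lambda V+b$ to the localized form $\mathcal{K}V\leq\lambda' V+b\,\mathbb{1}_{C_R}$ on $C_R=\{x: V(x)\leq R\}$ with $\lambda'=\lambda+b/R<1$ --- is correct and standard.

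The genuine gap is the petite-set step, which you assert rather than prove, and which I do not think can be proved the way you describe. First, you call $C_R$ ``the compact sublevel set,'' but nothing in the hypotheses makes sublevel sets of $V$ compact or even bounded: $V\geq 1$ plus a drift inequality does not force $V\to\infty$ at infinity (indeed $V\equiv 1$ satisfies the drift with $b=1-\lambda$, in which case $C_R$ is the whole space and your step would require the entire state space to be petite, i.e.\ uniform ergodicity --- clearly not something your argument delivers). Second, even granting compactness, the hypothesis $\limsup\mathcal{K}(x,\mathcal{R}(x)\cap\mathcal{B}(x))=0$ is an asymptotic condition --- in the Roberts--Tweedie and Durmus--Moulines--Saksman line of work, conditions of this type control the acceptance/rejection behaviour of the kernel in the tails, precisely to obtain contraction at infinity --- so it cannot supply a minorization $\mathcal{K}_q(x,\cdot)\geq\nu(\cdot)$ that must hold \emph{uniformly over the bounded set} $C_R$; since your drift is already assumed, invoking this hypothesis for the minorization is using it backwards. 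What is actually needed there is a uniform lower bound on the density of the accepted component of $\mathcal{K}$ (leapfrog proposal times acceptance ratio) over $x\in C_R$, obtained from continuity and positivity arguments; the conclusion of Lemma 3.3, pointwise positivity $\mathcal{K}(x,B)>0$, is strictly weaker than this and does not close the gap. Finally, the geometric-drift theorem you invoke requires aperiodicity, which you never verify.
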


It was shown in \cite{DMS2017} that under certain conditions, HFMC kernel is geometrically ergodic.

\section{Applications and Numerical Results}

\hspace{3ex} We will test our algorithm on  down-out(DOC) Barrier option pricing, and compare its estimate with a standard Monte Carlo and Particle Monte Carlo methods. Lets consider a toy example and assume that our asset follows the following SDE:
\begin{equation} \label{SDE1}
dX_t=\mu X_t dt + \sigma X_tdW_t
\end{equation}
where $\mu$ is a drift, $\sigma$ is a constant volatility parameter.
European DOC call Barrier option is a usual call option contract that pays a payoff $\max(S_T-K,0)$, provided that the asset price $S$ has not fallen below a barrier $B$ during the lifetime of the option. If the pricing process ever reaches the barrier $B$, then the option becomes worthless.

We use Euler-Muruyama disretization scheme and we use the following notation $X_{t_n}:=X_n$, so for a time discretization: $0=t_0,t_1,...,t_{n_t}=T$, the solution of the SDE in (\ref{SDE1}):
\begin{equation}
X_n=X_{n-1}e^{(\mu-0.5\sigma^2)\Delta t+\sigma \Delta t \epsilon_n}
\end{equation}

The DOC barrier call option price of a discretely monitored barrier at maturity $T$ is:
\begin{equation}
C=e^{-(r-q)T}\mathbb{E}[g(X_{n_S})\prod_{n=1}^{n_t} \mathbb{1}_{X_{t\in [t_{n-1},t_{n}]}\in A_n}]
\end{equation}

where $r,q$ are respectively an interest and a dividend rates,  $g(x)=(x-K)^+$ is a payoff function and the set $A_n$ in the case of a DOC barrier call option:
\begin{equation*}
A_n=\inf_{t_{n-1} \leq t \leq t_n}\{t:X_t>B\}
\end{equation*}
We use continuity correction that was proposed in \cite{BGK1997}: $B=B\exp^{-0.5826 \sigma \Delta t}$.

HFMC estimator to compute DOC call option is given by:
\begin{equation}
\widehat{C}^{HFMC}=e^{-(r-q)T} \frac{1}{n_S}\sum_{m=1}^{n_S} \left(g(X_{n_t}^{(m)})\prod_{n=1}^{n_t} e^{(-\mathcal{H}(X^{(m)}_{n+1},P^{(m)}_{n+1})+\mathcal{H}(X_{n}^{(m)},P_{n}^{(m)})\Delta t} \mathbb{1}_{X_{n}^{(m)},X_{n+1}^{(m)} \in A_n} \right)
\end{equation}

Monte Carlo estimate is given by:
\begin{equation}
\widehat{C}^{MC}=e^{-(r-q)T} \frac{1}{n_S}\sum_{m=1}^{n_S} \left( g(X_{n_t}^{(m)})\prod_{n=1}^{n_t} \mathbb{1}_{X_{n}^{(m)}\in A_n}\right)
\end{equation}

The IPS estimator is given by:
\begin{equation*}
\widehat{C}^{IPS}=e^{-(r-q)T} \frac{1}{n_S}\sum_{m=1}^{n_S} \left(g(\widehat{X}_{n_t}^{(m)})\prod_{t=1}^{n_t} W_{n-1}(\textbf{X}_{n-1}^{m}) \mathbb{1}_{\widehat{\textbf{X}}_n \in  A_n} \right) 
\end{equation*}

In the context of a rare event, we chose the barrier level at $65$, with an initial price $X_0=100$, Strike $K=100$, interest rate $r=0.1$, volatility $\sigma=0.3$, $T=0.5$ and zero dividends $q=0$. In the table $1$ and $2$, the Hamiltonian Flow MC, MC and IPS are presented. We used $50000$ and $75000$ particles with $750$ equally spaced time steps in Table $1$ and Table $2$. 

It is very important to choose the number  and the size of leapfrog steps. We chose them such that the acceptance probability $\alpha$ is bigger than $0.8$. 

We compare each approach by estimating the standard deviations, root mean squared error (RMSE), bias, relative mean squared error(RRMSE), time required to compute each estimate and the figure of merit (FOM). We run 20 MC experiments. The RMSE estimator is given by:
\begin{equation}
RMSE = \sqrt{\frac{1}{M_s}\sum_{l=1}^{M_s} ||C-\widehat{C}_{l}||^2}
\end{equation}
where $C$ is price computed analytically, $\widehat{C}_{l}$ are Monte Carlo estimates and $M_s$ is the number of Monte Carlo experiments. 


The RRMSE is computed using the following formula:
\begin{equation}
RRMSE = \frac{RMSE}{\widehat{C}}
\end{equation}

To measure the efficiency of each method presented in the article, we will use the figure of merit(FOM):
\begin{equation}
FOM = \frac{1}{R^2\times CPU_t}
\end{equation}

where $CPU_t$ is CPU time need to compute the estimator and $R$ is a relative error, which is the measure of a statistical precision:
\begin{equation}
R = \frac{St. dev}{\bar{C}} \propto \frac{1}{\sqrt{n_S}}
\end{equation}
where $\bar{C}=\sum_{l=1}^{M_s} \widehat{C}_{l}$ .

\begin{table}[H]
 \caption{DOC Barrier option estimates statistics. $B=65,X_0=100, \ K=100, \ r=0.1, \sigma = 0.3, \ T=1/2$, and $div=0$; $\delta=0.0001$,  \#(Leap frog step): $35$. True price: $10.9064$, $n_S=50000$, $n_t=750$}                                    

\centering                                                                          
 \begin{tabular}{|c|c|c|c|}                                                     
\hline                                                                       
Stat  & MC &   PMC &    HFMC   \\
 
\hline
St. dev. & 0.088518965 & 0.08562686 & 0.065318495\\
RMSE &  0.007011127 & 0.008004332 & 0.0143\\
RRMSE & 0.001298078 & 0.000292621 & 1.87148E-05\\
CPU time & 3.7251 & 4.8432 & 5.90675\\
FOM & 4097.9 & 3387.2 & 4737.6\\
\hline
\end{tabular}
\end{table}

\begin{table}[H]
 \caption{DOC Barrier option estimates statistics. $B=65, X_0=100, \ K=100, \ r=0.1, \sigma = 0.3, \ T=1/2$, and $div=0$; $\delta=0.0009$, \#(Leap frog step): $40$. True price: $10.9064$, $n_S=75000$, $n_t=750$}                                    

\centering                                                                          
 \begin{tabular}{|c|c|c|c|}                                                     
\hline                                                                       
Stat  & MC &   PMC &    HFMC   \\
 
\hline
St. dev. & 0.062385996 & 0.044259477 & 0.038039517\\
RMSE & 0.037561882 & 0.051285344 & 0.037561882\\
RRMSE & 0.000355199 & 0.000240548 & 0.000129293\\
CPU time & 2.2626 & 6.0322 & 7.6832\\
FOM & 13475.2 & 10117.7 & 10711.0 \\
\hline
\end{tabular}
\end{table}

We run $20$ independent Monte Carlo experiments for each estimate. Since IPS and the simulation of the Hamiltonian dynamics requires more time to compute an estimate, we use the figure of merit to compare three approaches. From the table 1 and 2 we can observe that HFMC demonstrates standard deviations, bias and relative RMSE. 

\section{Conclusion and Further Research} 

\hspace{3ex} We proposed an importance sampling algorithm based on the simulation of the Hamiltonian system, that generates a Markov Chain that follows along the gradient of the target distributions over large distances of the state space, while producing low-variance samples. 

From the simulated results we saw that HFMC allows efficiently estimate rare event probabilities, which we tested on the case of DOC Barrier options. Its estimates show lower variance and bias than that of MC and IPS.

It will interesting to adapt a stochastic gradient Hamiltonian Monte Carlo algorithm \cite{FG2014}, when one can avoid computing the gradient at each simulations. Taking into account the big data problem and the necessity of online estimations, we can get sufficient improvements. Another extension is the adaptation to the Riemann Manifold Hamiltonian Monte Carlo \cite{GC2011}, when we can create a statistical manifold and tune HFMC by computing explicitely the mass matrix $M$ in the kinetic energy of the algorithm.
 
In the next article we will show the performance of mixed IPS and the Hamiltonian Flow Monte Carlo. It will allow faster explore the state space on the one hand, and push trajectories into rare event area on the other hand. By resampling we can reduce the correlation between generated from the Hamiltonian system Markov chains.




\end{document}